%% file: main.tex
\documentclass[runningheads]{llncs}
\usepackage[T1]{fontenc}
\usepackage{graphicx}
\usepackage{booktabs}
\usepackage[misc]{ifsym}
\newcommand{\corr}{(\Letter)}

\usepackage{framed}
\usepackage{amsmath, amssymb}
\usepackage[numbers]{natbib}

\newcommand{\E}{\mathbb{E}}

\newcommand{\Xhat}{{\widehat{X}}}

\newcommand{\Zhat}{{\widehat{Z}}}
\newcommand{\indic}[1]{\mathbf{1}\left\{#1\right\}}

\renewcommand{\l}{\left}
\renewcommand{\r}{\right}

\newcommand{\Var}{\operatorname{Var}}
\newcommand{\Cov}{\operatorname{Cov}}

\tocauthor{Winston Chou}
\toctitle{Blending Proxy Metrics with a North Star}

\newif\ifincludeappendix
\includeappendixfalse 

\begin{document}

\title{Blending Proxy Metrics with a North Star}


\author{Winston Chou \corr}

\authorrunning{W. Chou}

\institute{Netflix, New York, NY \email{wchou@netflix.com}}

\maketitle

\begin{abstract}
    Proxy metrics are widely used to improve the precision and velocity of online experimentation (aka A/B testing).  Although proxies are often motivated by long-term outcomes that the experimenter does not observe, in many settings they are used alongside a \emph{contemporaneous} but statistically insensitive north star.  This can lead to a practical dilemma: when should experimenters trust the proxy metric, and when should they trust the north star?  In this paper, I propose an optimal blending approach that smoothly guides decision-making towards the north star as the power of the experiment increases and away from the north star as the quality of the proxy metric improves.  I study the implications of this decision-making framework for the design of experiments and of experimentation programs.  Equipped with better (worse) proxy metrics, experimenters should run smaller and more (larger and fewer) experiments.  I show how to leverage past experiments to estimate optimal blending weights and experiment sizes.  Lastly, I describe the real-world application of the methodology to an experimentation program at Netflix.
\end{abstract}

\keywords{A/B testing \and experimentation programs \and proxy metrics \and statistical surrogates}

\input{sections/01-introduction}
\input{sections/02-related-literature}
\input{sections/03-methodology}
\input{sections/04-simulations}
\input{sections/05-empirical-application-at-netflix}
\input{sections/06-conclusion}

\bibliographystyle{splncs04}
\bibliography{bib}

\ifincludeappendix
    \newpage
    \input{Supplementary/appendix-content}
\fi

\end{document}

%% file: sections/01-introduction.tex
\section{Introduction}

Proxy metrics are widely used to increase the pace and precision of digital experimentation.  As the name suggests, the goal of a proxy metric is to approximate a \emph{north star} metric that directly measures business success, such as customer conversions or user retention, with inputs that are more sensitive to product interventions, such as user engagement \citep{deng2016data}.

Proxy metrics, and related concepts like statistical surrogates, are often motivated by long-term outcomes that the experimenter cannot observe outside of observational data or rare long-term experiments \citep{athey2019surrogate, prentice1989surrogate, tripuraneni2024choosing}.  However, in many applications, experimenters \emph{do} observe the north star, but find that effects on it are too small to detect in everyday experiments.  For example, a digital subscription service observes whether members are retained at the end of an experiment, but the vast majority of digital product interventions (e.g., changes to recommendation algorithms) have very small effects on retention that can only be detected in very large experiments \citep{larsen2024statistical}.  This motivates the use of a proxy north star, such as the estimated treatment effect on retention as a function of the treatment effect(s) observed on user engagement \citep{bibaut2024learning, tran2023inferring}.

Thorny questions arise in this setting, for example:
\begin{enumerate}
    \item Should experimenters still rely on the proxy outcome in the large, infrequent experiments that \emph{are} powered to detect treatment effects on the true north star?
    \item What should experimenters do if the proxy north star and the true north star disagree?
    \item How should the quality of the proxy affect the design of experiments and experimentation programs?
\end{enumerate}

In this paper, I derive a methodology of deciding experiments that smoothly ``blends'' a proxy metric (or set of proxy metrics) with a north star metric.  Specifically, letting $a$ denote some number between 0 and 1 and $A := (a, 1 - a)^\top$, I consider the blended metric
\begin{equation}
    \Zhat(A) = A^\top \widehat{X} = a \widehat{P} +  (1 - a) \widehat{Y},
\end{equation}
where $\widehat{P}$ and $\widehat{Y}$ are estimated treatment effects on the proxy and north star, respectively, and $a$ is chosen to maximize the ``returns'' to $Y$ in experiments \citep{azevedo2018b, sudijono2024optimizing}.  Intuitively, the optimal $a^*$ (to be defined formally in Section \ref{sec:methodology}) should depend on the power of those experiments: as the power of an experiment goes to 1, $a^*$ should go to zero, and decisions should be made based solely on the north star.  Intuitively, $a^*$ should also depend on the quality of the proxy metric: at any fixed sample size, it should be increasing in the structural covariance of true treatment effects on the proxy with true treatment effects on the north star.  Lastly, the quality of the proxy should have implications for the design of future experiments.  If the proxy is excellent and large experiments are costly (for example, because they reduce traffic to other experiments), then experimenters should run smaller and more experiments compared to when the proxy is poor.

My analysis confirms these intuitions and motivates an estimator of $a^*$ that leverages the structural covariance between treatment effects on the proxy and the north star learned from past experiments \citep{bibaut2024learning, cunningham2022interpreting, tripuraneni2024choosing}.  Using simulations, I show how blending can significantly increase the returns to experimentation relative to making decisions exclusively based on the proxy metric ($a = 1)$ or the north star ($a = 0$).  Lastly, I describe the real-world application of the methodology to construct an adaptively blended decision metric for A/B tests at Netflix.

%% file: sections/02-related-literature.tex
\section{Related Literature}

This paper contributes first and foremost to the literature on proxy metrics in A/B testing, especially to research on how to learn good proxy metrics from meta-analysis of past experiments \citep{bibaut2024learning, buyse2000validation, chou2025evaluating, deng2024metric, deng2016data, tripuraneni2024choosing, zhang2023evaluating}.  Distinctively, I focus on settings in which experimenters simultaneously observe \emph{both} an insensitive north star metric \emph{and} more sensitive proxies for that metric.  A natural question in these settings is whether and how experimenters should incorporate both types of metrics into decision-making.

I show that the optimal blend of metrics in decision-making depends on the size of the experiment, with the optimal weight concentrating exclusively on the north star metric in an infinitely large experiment.  This echoes the insight of \cite{tripuraneni2024choosing} that the optimal decision rule for an experiment depends on its power.  Building on the return-aware framework of \cite{azevedo2018b, azevedo2023b, sudijono2024optimizing}, I draw out the implications of this insight when sample sizes are endogenous.  Specifically, experimenters should run smaller and more experiments if the proxy is excellent, and larger and fewer experiments if the proxy is poor.

%% file: sections/03-methodology.tex
\section{Methodology}
\label{sec:methodology}

\input{sections/03-methodology-subsections/03-01-framework-and-notation}
\input{sections/03-methodology-subsections/03-02-optimal-blending-weights}
\input{sections/03-methodology-subsections/03-03-estimation-from-past-experiments}
\input{sections/03-methodology-subsections/03-04-implications-for-statistical-power}
\input{sections/03-methodology-subsections/03-05-implications-for-experimentation-programs}

%% file: sections/03-methodology-subsections/03-01-framework-and-notation.tex
\subsection{Framework and Notation}

The setting of this paper is as follows.  Experiments are drawn independently from a distribution and consist of true treatment effects $X = (P, Y)$, where $P \in \mathbb{R}^m$ is a vector, possibly one-dimensional, of treatment effects on proxy metrics and $Y$ is the treatment effect on the north star metric.  Denote the variance-covariance matrix of $X$ across experiments by $\Sigma_X$.

The experimenter observes estimates of the treatment effects $\widehat{X} = (\widehat{P}, \widehat{Y})$ on the proxy metric(s) and north star metric.  For simplicity, I assume that each experiment consists of a treatment group and a control group, each of size $n$ in every experiment, and that the experimenter estimates treatment effects using the difference in sample means.  Thus, conditional on the true effects, the estimated treatment effects $\widehat{X}$ are Gaussian with distribution
\begin{equation}
    \widehat{X} | X \sim \mathcal{N}(X, 2\Omega/n),
\end{equation}
where $\Omega$ is the sampling variance of the unit-level random variables underlying $Y$ and $P$ \citep{cunningham2022interpreting}.  I assume that $\Sigma_X$ and $\Omega$ are both positive-definite.  Therefore, the unconditional (total) variance-covariance matrix of $\widehat{X}$, $\Sigma_{\widehat{X}} := \Sigma_X + 2\Omega/n$, is also positive-definite.

I further assume that the true treatment effects $X$ are Gaussian with zero mean---a workhorse model in the literature that facilitates closed-form analysis \citep[][]{azevedo2023b, cunningham2022interpreting, tripuraneni2024choosing}.  In Section~\ref{sec:simulations}, I show using simulations that the key takeaways hold for heavier-tailed treatment effect distributions \citep[cf.][]{azevedo2018b}.

%% file: sections/03-methodology-subsections/03-02-optimal-blending-weights.tex
\subsection{Choosing Optimal Blending Weights}
\label{sec:optimal-weights}

I propose a \emph{return-aware} approach to choosing the optimal blending weights \citep{azevedo2018b, sudijono2024optimizing}.  Specifically, I derive the optimal choice of $A$ under a decision rule that ``launches'' the treatment so long as the blended statistic $\Zhat(A) := A^\top\Xhat$ is significantly greater than zero at some chosen significance level $\alpha$.  The returns to $Y$ of such a decision rule are given by
\begin{equation}
    R(A) := \mathbb{E}\l[Y\indic{A^\top\Xhat > z\sqrt{A^\top (2\Omega/n) A}}\r],
    \label{eq:return-operational}
\end{equation}
where $z$ is the critical value corresponding to $\alpha$.

$A^*$ is defined as the choice of $A$ that maximizes \eqref{eq:return-operational}
\begin{equation}
A^* := \arg \max_{A \in \mathcal{A}} R(A),
\end{equation}
where $\mathcal{A} = \{A \in \mathbb{R}^{m+1} : A \ge 0, \sum_{i=1}^{m+1} A_i = 1\}$.  I will assume there is an interior solution to this optimization problem (i.e., $A^*$ contains strictly positive weights).  This assumption holds by construction if, for example, one evaluates a set of proxies thought to have a positive relationship with the north star and then, for interpretability, keeps only those proxies having strictly positive weights at the optimum.  Note that this does not rule out proxies that are thought to have a negative relationship with the north star (e.g., customer service complaints), since these can be rescaled by $-1$; it only rules out proxies with zero or ``wrongly'' signed weights at the optimum.

The optimal $A^*$ can be found by solving the following constrained optimization problem:
\begin{equation}
\max_{A \in \mathcal{A}} R(A) = \max_{A \in \mathcal{A}} \frac{A^\top\Gamma}{\sqrt{A^\top\Sigma_{\widehat X}A}}\,\phi\!\big(z\sqrt{q_n(A)}\big),
\label{eq:optimization-problem}
\end{equation}
where $\Gamma := \Cov(X, Y) = \Cov((P, Y), Y)$, $\phi$ is the standard normal pdf, and $q_n(A) := \frac{A^\top (2\Omega/n) A}{A^\top \Sigma_\Xhat A}$ is the ratio of the within-experiment sampling variance of $\Zhat(A)$ to its total variance across experiments.  Although \eqref{eq:optimization-problem} does not have a closed-form solution in general, Proposition~\ref{prop:approximation} provides a closed-form approximation to $A^*$ that is very accurate for sufficiently large experiments.

\begin{proposition}
    The \emph{approximately optimal blending weighting vector} $\tilde{A}$ is defined as the vector on the simplex that satisfies
    \begin{equation}
        \tilde{A} \propto (\Sigma_X + (1 + z^2) 2\Omega/n)^{-1} \Gamma.
    \end{equation}

    Assume joint normality of $X = (P, Y)$ and that $A^*$ is an interior solution to \eqref{eq:optimization-problem}, i.e., $A_i^* > 0$ for all $i$.  Then, for $n \ge \underline{n}$, where $\underline{n}$ is given by \eqref{eq:bound-n}, we have that:
    \begin{enumerate}
        \item The approximation error of $\tilde{A}$ is $O(n^{-2})$.
        \item The regret of $\tilde{A}$ is $O(n^{-4})$.
    \end{enumerate}
    \label{prop:approximation}
\end{proposition}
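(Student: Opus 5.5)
The plan is to write the first-order conditions for \eqref{eq:optimization-problem}, which hold with equality because $A^*$ is interior, and compare them with the condition that defines $\tilde{A}$. Set $S := \Sigma_\Xhat$ and $W := 2\Omega/n$. Write $u(A) := A^\top\Gamma$, $v(A) := A^\top S A$, $w(A) := A^\top W A$, so that $q_n = w/v$. Maximizing $\log R(A) = \log u - \tfrac12 \log v - \tfrac{z^2}{2}\, w/v$ over the simplex gives, for a Lagrange multiplier $\lambda$,
\begin{equation*}
\frac{\Gamma}{u} - \frac{SA}{v} - z^2\Big(\frac{WA}{v} - \frac{w\, SA}{v^2}\Big) = \lambda \mathbf{1}.
\end{equation*}
Both $R$ and the simplex constraint are invariant under positive rescaling of $A$. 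By Euler's identity, taking the inner product with $A$ gives $\lambda = 0$, so the condition becomes scale-free:
\begin{equation*}
\big[(1 + z^2 q_n(A))\,S + z^2 W\big] A \propto \Gamma .
\end{equation*}

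Next I rewrite the bracket in the form used by $\tilde{A}$. Since $S = \Sigma_X + W$, the bracket equals
\begin{equation*}
(1+z^2 q_n)\Big[\Sigma_X + (1+z^2)W - \frac{z^2 q_n\, \Sigma_X}{1+z^2 q_n}\Big].
\end{equation*}
So $A^*$ solves $(M_n - \varepsilon_n \Sigma_X) A \propto \Gamma$, where $M_n := \Sigma_X + (1+z^2)W$ and $\varepsilon_n := z^2 q_n(A^*)/(1+z^2 q_n(A^*))$. Here $q_n(A) \le \lambda_{\max}(W)/\lambda_{\min}(S) = O(n^{-1})$ uniformly in $A$, so $\varepsilon_n = O(n^{-1})$. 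This is already first order, but I expect a cancellation. Up to normalization, $A^* \propto (I - \varepsilon_n M_n^{-1}\Sigma_X)^{-1}\tilde{A}$. Since $M_n^{-1}\Sigma_X = I - (1+z^2)M_n^{-1}W = I + O(n^{-1})$, a Neumann expansion gives
\begin{equation*}
(I - \varepsilon_n M_n^{-1}\Sigma_X)^{-1} = (1-\varepsilon_n)^{-1}\big[I + O(\varepsilon_n n^{-1})\big].
\end{equation*}
The scalar factor $(1-\varepsilon_n)^{-1}$ disappears when we project onto the simplex. The remaining perturbation is $O(\varepsilon_n/n) = O(n^{-2})$, and normalization preserves this order because $\mathbf{1}^\top \tilde{A}$ is bounded away from zero. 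That gives part 1.

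The threshold $\underline{n}$ in \eqref{eq:bound-n} is where I expect the main technical obstacle. It must ensure that the Neumann series converges, i.e.\ $\varepsilon_n \|M_n^{-1}\Sigma_X\| < 1$. It must also ensure that $\tilde{A}$ itself lies in the interior of the simplex, so that both points lie in a neighborhood of the optimum where $R$ is smooth and strongly concave on the simplex. To get this I will compare with the $n\to\infty$ limit $\Sigma_X^{-1}\Gamma = e_{m+1}$ direction and use the interiority assumption. I would get explicit constants by bounding $q_n$ through $\lambda_{\max}(W)/\lambda_{\min}(\Sigma_X)$ and the perturbation through eigenvalue ratios of $\Omega$ and $\Sigma_X$.

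For part 2, the regret is $R(A^*) - R(\tilde{A})$. Because $A^*$ is an interior maximizer, the gradient of $R$ restricted to the tangent space of the simplex vanishes there. A second-order Taylor expansion then gives
\begin{equation*}
R(A^*) - R(\tilde{A}) \le \tfrac12 \sup \|\nabla^2 R\|\, \|\tilde{A} - A^*\|^2 = O(n^{-4}),
\end{equation*}
using part 1 and the uniform boundedness of the Hessian of the smooth function in \eqref{eq:optimization-problem} on a compact neighborhood, since $v$ is bounded below by $\lambda_{\min}(\Sigma_X)\|A\|^2 > 0$.
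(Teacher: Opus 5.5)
Your skeleton matches the paper's proof. You turn the interior first-order condition into a fixed-point equation $A^* \propto (\cdot)^{-1}\Gamma$. You then show the matrix differs from $M_n := \Sigma_X + (1+z^2)2\Omega/n$ by a perturbation whose effect on the direction is $O(n^{-2})$. The regret follows from a second-order Taylor expansion with vanishing tangential gradient; your part 2 is exactly the paper's argument. Your Lagrangian/Euler derivation of $\lambda = 0$ is a useful addition, since the paper simply asserts \eqref{eq:fixed-point-weights}.

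There is, however, a sign error in the step from your (correct) first-order condition to the bracket. Multiplying $\Gamma/u = SA/v + z^2 WA/v - z^2 q_n SA/v$ through by $v$ gives
\[
\left[(1 - z^2 q_n(A))\,S + z^2 W\right] A \propto \Gamma ,
\]
not $1 + z^2 q_n$. Your next display is also not an identity, even for your own bracket. Its right-hand side expands to $\Sigma_X + (1+z^2 q_n)(1+z^2)W$, whereas $(1+z^2q_n)S + z^2W = (1+z^2q_n)\Sigma_X + (1+z^2+z^2q_n)W$. So your $\varepsilon_n$ is not the right quantity.

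The repair is short, and it is exactly what explains the threshold you flagged as the main obstacle. With the correct sign,
\[
(1 - z^2 q_n)S + z^2 W = (1 - z^2 q_n)\left[\Sigma_X + \left(1 + \frac{z^2}{1 - z^2 q_n}\right)W\right],
\]
which yields \eqref{eq:fixed-point-weights}. The discrepancy from $M_n$ is $\frac{z^4 q_n}{1 - z^2 q_n}W = O(n^{-1})\cdot O(n^{-1})$, so part 1 follows directly, with no Neumann cancellation needed.

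If you keep your parametrization, the corrected bracket is a positive multiple of $M_n - \varepsilon_n \Sigma_X$ with $\varepsilon_n = z^4 q_n/(1 + z^2 - z^2 q_n) = O(n^{-1})$. Your expansion then goes through unchanged.

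In both forms the admissibility condition is $z^2 q_n(A^*) \le 1$. It must be strict for the paper's division. In your form, $\varepsilon_n \le 1$ holds if and only if $z^2 q_n \le 1$. That is what makes $(1-\varepsilon_n)\Sigma_X + (1+z^2)W$ positive definite, and, since $M_n^{-1}\Sigma_X$ has spectrum in $(0,1)$, what makes your Neumann series converge. The condition $z^2 q_n(A^*) \le 1$ is precisely $n \ge \underline{n}(A^*)$ in \eqref{eq:bound-n}.

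With your sign this could never surface, because your $\varepsilon_n$ is below $1$ for every $n$. That is why you were led to look for the threshold in eigenvalue bounds on $q_n$. Those bounds would give a cruder threshold than the one in the statement, not \eqref{eq:bound-n}.
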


\begin{proof}
    For $n \ge \underline{n}$, the optimal blending weights $A^*$ solve the following fixed-point equation:
    \begin{equation}
    A^* \propto (\Sigma_X + \left(1 + \frac{z^2}{1 - z^2 q_n(A^*)}\right) 2\Omega/n)^{-1} \Gamma.
    \label{eq:fixed-point-weights}
    \end{equation}
    $q_n(A^*)$ is $O(n^{-1})$, so $z^2 / (1 - z^2 q_n(A^*)) = z^2 + O(n^{-1})$.  $2\Omega/n$ is also $O(n^{-1})$, so $(z^2 - z^2 + O(n^{-1})) 2 \Omega/n$ is $O(n^{-2})$.  Therefore, the direction of the optimal weights $(\Sigma_X + (1 + z^2) 2\Omega/n + O(n^{-2}))^{-1} \Gamma$ matches the direction $\tilde{A}$ through first order.  This establishes the first item.

    By a Taylor expansion of $R(A)$ around $A^*$, we have that
    \begin{equation}
    R(\tilde{A}) = R(A^*) + \nabla R(A^*)(\tilde{A} - A^*) + \frac{1}{2} (\tilde{A} - A^*)^\top H(A^*) (\tilde{A} - A^*) + o(||\tilde{A} - A^*||^2),
    \end{equation}
    where $\nabla R(A^*)$ and $H(A^*)$ are the gradient and Hessian of $R(A)$ at $A^*$, respectively.  Interiority of $A^*$ implies that $\nabla R(A^*)^\top (\tilde{A} - A^*) = 0$.  Therefore, the regret of $\tilde{A}$ is:
    \begin{equation}
        R(A^*) - R(\tilde{A}) = \underbrace{-\frac{1}{2} (\tilde{A} - A^*)^\top H(A^*) (\tilde{A} - A^*)}_{O(n^{-4})}.
    \end{equation}
    This establishes the second item. \hfill \qed
\end{proof}

In other words, Proposition~\ref{prop:approximation} states that $\tilde{A}$ is an excellent closed-form approximation to $A^*$ for sufficiently large experiments, where ``sufficiently large'' is defined by \eqref{eq:bound-n} below.  In turn, $\tilde{A}$ can be interpreted as the \emph{penalized} linear regression of $Y$ on $X = (P, Y)$ with the penalty given by a scale factor of the measurement error variance $(1 + z^2) 2\Omega/n$.  This penalty regularizes the weights towards metrics with smaller variance within experiments (i.e., $\Omega$) relative to their variance across experiments (i.e., $\Sigma_X$).  In an infinitely large experiment ($n \to \infty$), the penalty disappears, and the optimal weights concentrate exclusively on the north star metric.  However, at any finite value of $n$, the optimal weight on any given metric will, all else equal, be increasing in its respective component of $\Gamma$.

Note that the strength of the penalty is increasing in the chosen critical value $z$.  Intuitively, a large $z$ means that only the most extreme test statistics will result in launches, which drives up the optimal weights on proxy metrics that are statistically precise.  Because statistical precision tends to trade off with alignment with the north star, this can come at the expense of more structurally aligned proxies \citep{zito2025pareto}.

The accuracy of the approximation depends on the size of each treatment arm $n$.  The relevant condition for $n$ is that $q_n(A^*) \le 1 / z^2$, or equivalently that
\begin{equation}
    n \ge \underline{n}(A^*) := (z^2-1)\frac{2\,A^{*\top}\Omega A^{*}}{A^{*\top}\Sigma_X A^{*}}.
    \label{eq:bound-n}
\end{equation}
In the Appendix, I show that substituting $\tilde{A}$ for $A^*$ in this expression yields a weakly conservative bound, $\underline{n}(\tilde{A}) \ge \underline{n}(A^*)$.  Therefore, if $n$ is greater than $\underline{n}(\tilde{A})$, one can safely use the approximation.  If $n$ is smaller than $\underline{n}(\tilde{A})$, the optimal weights should be found by solving the constrained optimization problem \eqref{eq:optimization-problem} directly.

%% file: sections/03-methodology-subsections/03-03-estimation-from-past-experiments.tex
\subsection{Estimation From Past Experiments}
\label{sec:estimation}

Given a set of past experiments, $\tilde{A}$ can be estimated as follows:
\begin{enumerate}
    \item Estimate the within-experiment sampling covariance $2\Omega/n$, for example, by drawing a random sample from each experiment and estimating the sampling covariance from the concatenated samples.
    \item Estimate $\Sigma_X$ using nonparametric estimators \citep{bibaut2024learning}, empirical Bayes estimators \citep{azevedo2019empirical}, or a Bayesian hierarchical model \citep{tripuraneni2024choosing}.
    \item Estimate $\tilde{A}$ using the closed-form plug-in estimator
    \begin{equation}
        \dot{A} = (\widehat{\Sigma_X} + \left(1 + z^2\right) 2 \widehat{\Omega}/n)^{-1} \widehat{\Gamma},
    \end{equation}
    where $\widehat{\Gamma} = \widehat{\Cov}(X, Y) = \widehat{\Cov}((P, Y), Y)$ is the last column of $\widehat{\Sigma_X}$.
    \item If any component of $\dot{A}$ is negative, drop that component and repeat steps 1--3.
    \item Scale $\dot{A}$ by the sum of its components to obtain $\widehat{\tilde{A}}$.
    \item Check that $n \ge \underline{n}(\widehat{\tilde{A}})$ and, if not, solve the constrained optimization problem \eqref{eq:optimization-problem} numerically for the exact weights.
\end{enumerate}

I illustrate this estimator in Sections~\ref{sec:simulations} and~\ref{sec:emp}.

%% file: sections/03-methodology-subsections/03-04-implications-for-statistical-power.tex
\subsection{Implications for Statistical Power}
\label{sec:power}

Thus far, our focus has been on the impact of blending on the returns to experimentation.  In this section, I study how blending can affect another important aspect of experimentation programs: the statistical power of experiments.  Power is an important practical consideration, since experiments that are chronically underpowered can cause firms to miss valuable innovations, overstate the value of ``winning'' treatments \citep{gelman2014beyond, lee2018winner}, and waste resources on false positives \citep{kohavi2024false}.  Below, I derive the minimum sample size needed to detect a given treatment effect on the north star metric when using the blended metric $\Zhat(A)$ compared to the north star or proxy exclusively.

As a benchmark, consider the scenario in which the experimenter only decides experiments based on the estimated treatment effect on the north star metric ($a = 0$).  Then, by a standard power analysis, the minimum sample size required to detect a given treatment effect $\tau_Y$ with $80\%$ power is
\begin{equation}
n_0 := \frac{(z + 0.84)^2 2\,\omega_Y}{\tau_Y^2},
\end{equation}
where, as before, $z$ denotes the critical value for the test and $\omega_Y$ is the variance of the unit-level variable underlying $Y$.

Alternatively, consider the minimum sample size required to be ``powered'' for the same effect on the north star metric when using the blended metric $\Zhat(A)$.  Note that the use of ``power'' here is somewhat informal as $\Zhat(A)$ is not an estimate of $Y$.  Rather, by power I mean that, at this sample size, $\Zhat$ is significantly greater than zero with $80\%$ probability when $Y = \tau_Y$.  Denote this (per-arm) sample size by $n_A$.  Fixing $Y$ at $\tau_Y$, $\Zhat(A)$ is Gaussian with mean
\begin{equation}
\mu_A(\tau_Y) := \E[\Zhat(A) | Y = \tau_Y] = \frac{1}{\sigma_Y^2} A^\top\Gamma \tau_Y,
\end{equation}
and residual variance
\begin{equation}
\sigma_A(\tau_Y)^2 := \Var(\Zhat(A) \mid Y=\tau_Y) = A^\top \Sigma_\Xhat A - \frac{(A^\top \Gamma)^2}{\sigma_Y^2}.
\end{equation}

Launching whenever $\Zhat(A) > z\sqrt{A^\top(2\Omega/n)A}$ and imposing a target power of $80\%$ requires
\begin{equation}
\mu_A(\tau_Y) = z\sqrt{A^\top(2\Omega/n)A} + 0.84\,\sigma_A(\tau_Y).
\end{equation}
It will be convenient to define
\begin{equation}
\omega_A := A^\top \Omega A \quad \text{and} \quad \theta := \frac{0.84 \sigma_A(\tau_Y)}{\mu_A(\tau_Y)}.
\end{equation}
In words, $\theta$ is the ratio of the residual standard deviation of $\Zhat(A)$ given $Y = \tau_Y$, multiplied by the power threshold ($0.84$), to its mean.  Increasing the target detection probability or residual standard deviation increases $\theta$, while a larger conditional mean reduces $\theta$. 

Then, in order to be powered for the same effect size on the north star metric when using the blended metric $\Zhat(A)$, it must be that
\begin{equation}
\mu_A(\tau_Y) \ge z \sqrt{\frac{2\,\omega_A}{n_A}} + 0.84\,\sigma_A(\tau_Y),
\end{equation}
or equivalently,
\begin{equation}
1 \ge \frac{z \sqrt{2\omega_A / n_A}}{\mu_A(\tau_Y)} + \theta.
\end{equation}
Solving for the minimum sample size $n_A$, we have that:
\begin{equation}
n_A \ge \frac{(z + 0.84)^2 2\,\omega_A}{\mu_A(\tau_Y)^2} \cdot I(\theta),
\label{eq:n-a}
\end{equation}
where
\begin{equation}
I(\theta) = \frac{z^2}{(z + 0.84)^2 (1 - \theta)^2}.
\end{equation}

The expression for $n_A$ is (intentionally) analogous to the expression for $n_0$.  Both expressions are inflated by the sampling variance of the underlying variables ($\omega_Y$ and $\omega_A$, respectively) and deflated by the expected effect size ($\tau_Y$ and $\mu_A(\tau_Y)$, respectively).  However, the central difference is that $n_A$ is inflated by an additional factor $I(\theta)$ that accounts for imperfect alignment between the proxy metric(s) and the north star.

Note that, as $\theta \to 1$, $I(\theta) \to \infty$.  Thus, $\mu_A(\tau_Y) > 0.84 \sigma_A(\tau_Y)$ is a necessary condition for the rule to attain $80\%$ power for some finite sample size.  This condition requires that both the conditional mean of $\Zhat(A)$ is sufficiently large and that its residual variance is sufficiently small given $Y = \tau_Y$.

Because both sides of $\eqref{eq:n-a}$ depend on $n_A$, it does not directly yield an explicit solution for $n_A$ and is mainly presented for interpretation.  In practice, one can solve for $n_A$ by evaluating a grid of values and choosing the smallest value satisfying the inequality~\eqref{eq:n-a}.

%% file: sections/03-methodology-subsections/03-05-implications-for-experimentation-programs.tex
\subsection{Implications for Experimentation Programs}

Above, the optimal blending weights were shown to depend on experiment sizes: as the experiment size grows, experimenters should shift the balance of decision-making away from the proxy metric towards the true north star.  Yet, as \eqref{eq:optimization-problem} shows, increasing $n$ also increases the expected returns to any individual experiment.\footnote{$\sqrt{A^\top \Sigma_\Xhat A}$ is decreasing in $n$, while $\phi(\cdot)$ attains its maximum at 0, which is the limit of $z \sqrt{q_n(A)}$ as $n \to \infty$.}  With just a single experiment and unlimited allocations, experimenters should allocate as many units as possible to their experiment and shift decision-making towards the north star.

However, in practice, experimenters are responsible for a portfolio of experiments, to which they have a finite number of units to allocate \citep{azevedo2018b,sudijono2024optimizing, tang2010overlapping}.  Therefore, increasing allocations to any one experiment has an opportunity cost, as it reduces the number of allocations available to other experiments.

To formalize this tradeoff, let $R(A, n) / n$ denote the average return of a single experiment of size $n$ (per arm) under any choice of $A$:
\begin{equation}
    R(A, n) / n =  \frac{\Gamma^\top A}{\sqrt{A^\top \Sigma_\Xhat A}} \phi\left(z \sqrt{q_n(A)}\right) / n.
\end{equation}

In the Appendix, we show that $R(A, n) / n$ is decreasing in $n$ for any choice of $A$ assuming typical values of $z$ (i.e., $1 \le z \le 1 + \sqrt{2} \approx 2.414$).\footnote{The Appendix for this paper can be found at \url{https://github.com/winston-chou/blending}.}  Note that this holds for any $A$, including the optimal $A$ given $n$.  Therefore, letting $\delta > 0$,
\begin{equation}
    R(A^*(n + \delta), n + \delta) / (n + \delta) \le R(A^*(n + \delta), n) / n \le R(A^*(n), n) / n,
\end{equation}
where $A^*(n) = \arg \max_A R(A, n)$.  Thus, letting $R^*(n) := R(A^*(n), n)$, the average optimal return per unit of allocation $R^*(n) / n$ is also decreasing in $n$.

Next, consider $n_A$ incremental treatment-control pairs (the minimum required to run a well-powered experiment).  If the experimenter allocates this block to a new test, the incremental expected return is $R^*(n_A)$.  However, if instead she spreads the same $n_A$ pairs evenly across $J$ existing powered tests of size $n_A$, each existing test receives $\delta = n_A/J$ additional pairs, yielding an incremental improvement of
\[
    J \left[ R^*(n_A+\delta)-R^*(n_A)\right].
\]
Thus, allocating the pairs to a new test is weakly preferred whenever
\begin{equation}
    R^*(n_A) \ge J \left[ R^*(n_A+\delta) - R^*(n_A) \right],
\end{equation}
which is directly implied by $\frac{R^*(n)}{n}$ decreasing in $n$.  Therefore, the experimenter always prefers to allocate additional units to a new test over distributing them over existing tests so long as the launch threshold is not too stringent (i.e., $z \le 2.414$, corresponding to a one-sided $\alpha$ of about 0.008).  If the launch threshold is more stringent than this, then average returns are no longer guaranteed to be decreasing in $n$, and the experimenter should consider distributing the additional units across existing tests to increase power rather than allocate to a new test.

%% file: sections/04-simulations.tex
\section{Simulations}

\label{sec:simulations}

To show how blending improves decision-making, here I conduct a simulation study that examines the performance of north star-only, proxy-only, and blended decision rules.  Replication code for all figures in this section can be found at \url{https://github.com/winston-chou/blending}.

I set the simulation parameters as follows:

\begin{framed}
    \paragraph{Simulation parameters.}
    \begin{itemize}
        \item Baseline true effect covariance: $\Sigma_X = \begin{bmatrix}1.0 & 0.08 \\ 0.08 & 0.01\end{bmatrix}$ (i.e., $\sigma^2_P = 1.0$, $\sigma^2_Y = 0.01$, $\sigma_{PY} = 0.08$, $\rho = 0.8$).  Simulations~2--4 fix $\sigma^2_Y = 0.01$ and $\sigma^2_P = 1.0$ but sweep $\rho = \mathrm{Corr}(P,Y) \in \{0, 0.2, 0.4, 0.6, 0.8, 1.0\}$.
        \item Sampling covariance: $\Omega = \begin{bmatrix}25 & 8 \\ 8 & 16\end{bmatrix}$ (i.e., $\omega_P = 25$, $\omega_Y = 16$).
        \item Critical value: $z \approx 1.645$ (i.e., one-sided $\alpha = 0.05$).
        \item Number of experiments per replicate for computing cumulative returns: $J = 100$.
        \item Per-arm sample size grid: $n \in \{1000, 2000, \ldots, 20000\}$.
        \item Monte Carlo replicates: $R = 1{,}000$.
    \end{itemize}
\end{framed}

These parameter values were chosen to reflect a realistic setting in which the north star metric is relatively insensitive to treatments, whereas the proxy metric is more sensitive in the sense of having a larger ratio of treatment effect variance to measurement error variance.  The closed-form approximation to the optimal blending weights from Proposition~\ref{prop:approximation} is used except where the bound $\underline{n}(\tilde{A})$ of Section~\ref{sec:optimal-weights} is violated---at low $\rho$ and small $n$---in which case the weights are obtained by numerically optimizing \eqref{eq:optimization-problem}.

First, I assess the cumulative returns over a set of $J$ experiments under each decision rule.  I plot these returns as a function of the per-arm sample size $n$ in the first panel of Figure \ref{fig:sim1-cumulative-returns}.  As the panel shows, the blended decision rule consistently yields higher cumulative returns than do the north star-only and proxy-only rules across all sample sizes.

\begin{figure}
    \centering
    \includegraphics[width=0.49\linewidth]{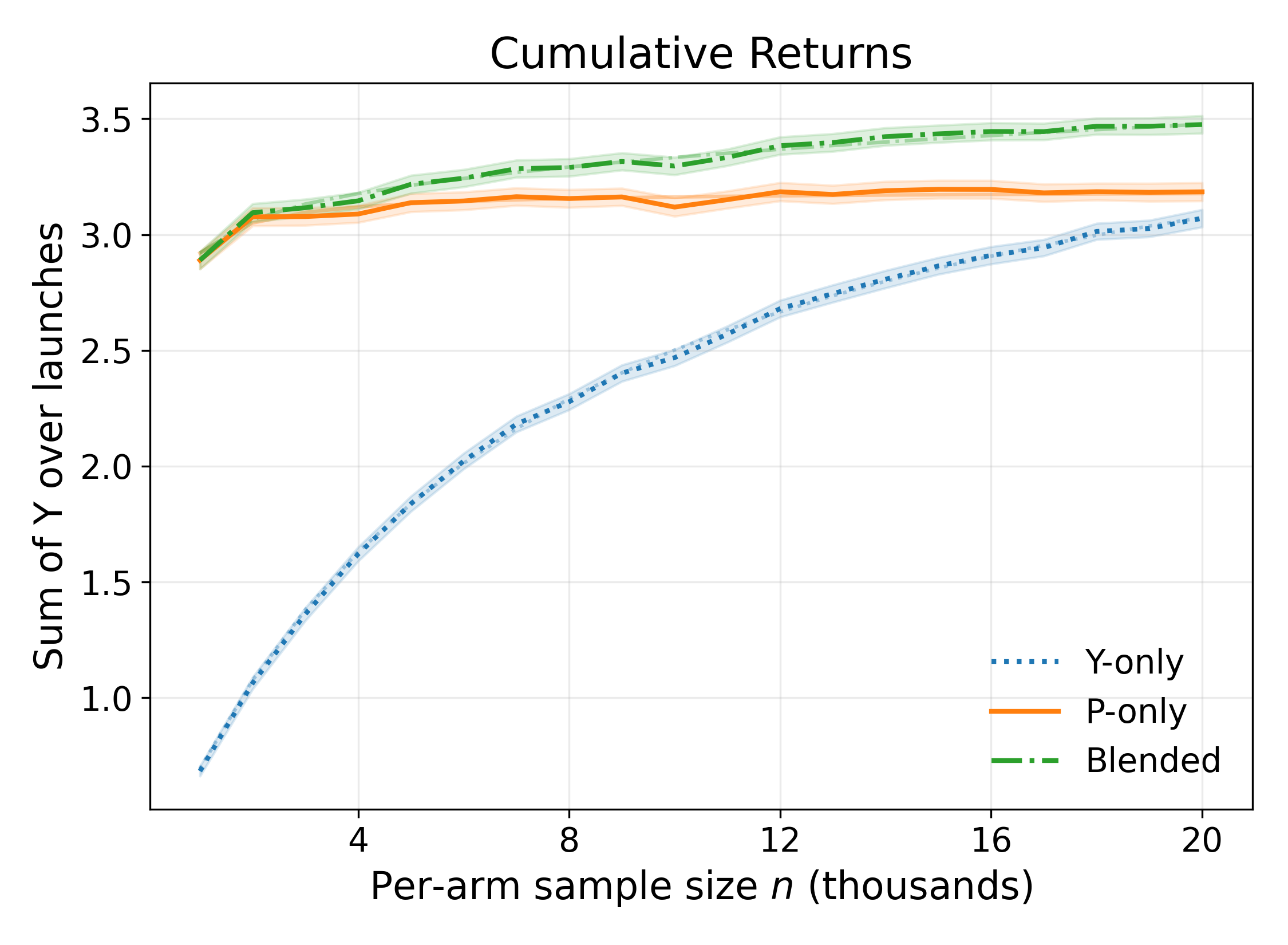}
    \includegraphics[width=0.49\linewidth]{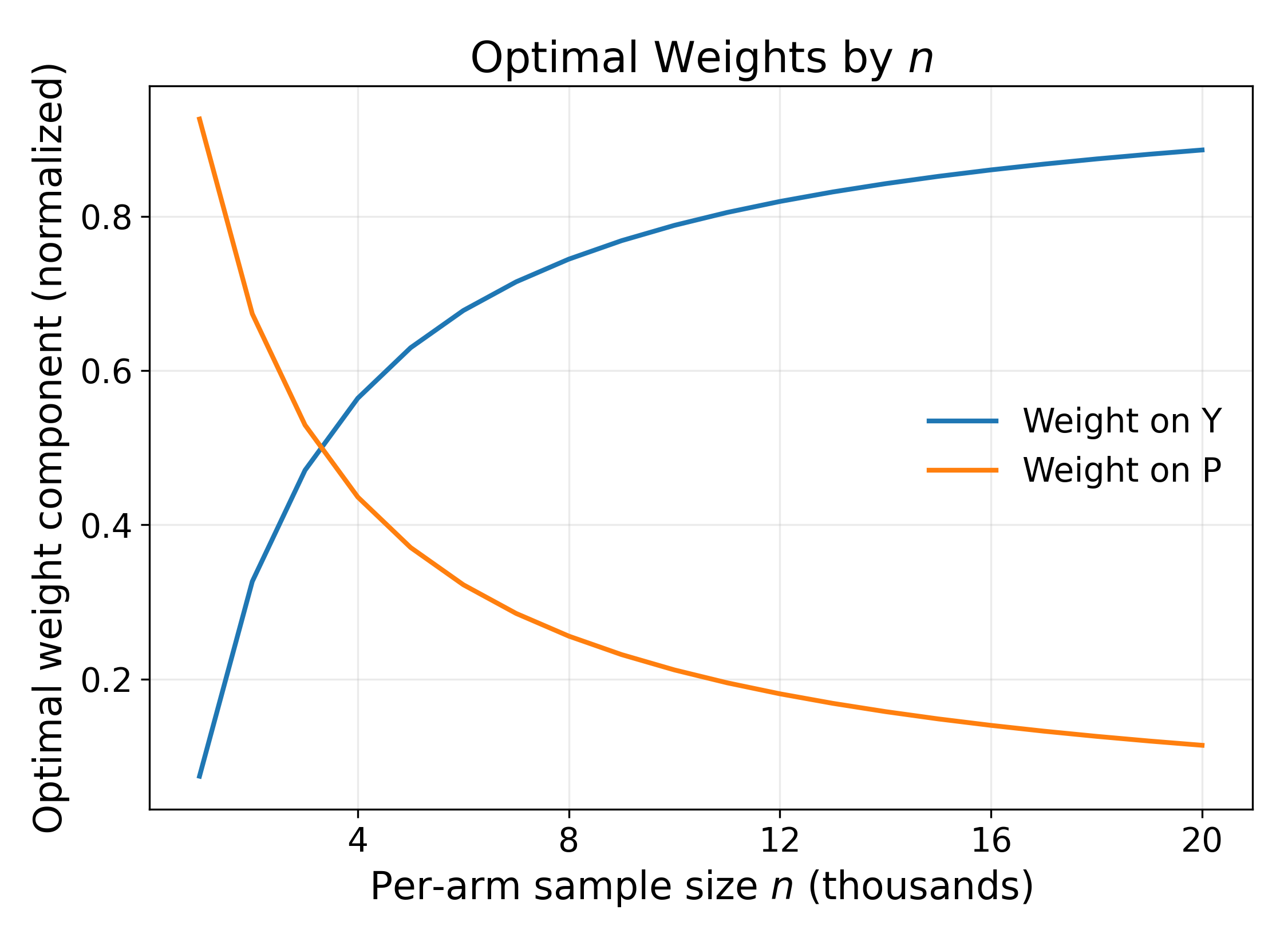}
    \caption{Cumulative Returns (left) and Optimal Blending Weights (right) by $n$.}
    \label{fig:sim1-cumulative-returns}
\end{figure}

The intuition for this is as follows.  At small sample sizes, the proxy metric dominates the north star due to its much greater signal-to-noise ratio.  However, as the sample size increases, treatment effects on the north star---which is better aligned with the objective---become measured with greater precision.  As a result, for very large $n$, it is better to shift decision-making towards the north star.  By construction, the blended metric uses the optimal weights at any given $n$, and so smoothly concentrates the weights on the north star as $n$ grows.  This behavior is shown in the second panel of Figure \ref{fig:sim1-cumulative-returns}.

Next, I evaluate the statistical power and false positive rates of the north star-only, proxy-only, and blended decision rules under the practical test.  In the first set of simulations, I vary both the sample size and the structural correlation between the north star and proxy metrics while fixing the true treatment effect on the north star to be $\tau_Y = \sigma_Y$.  I visualize the results using heatmaps where the sample size varies along the horizontal axis and the correlation in treatment effects varies along the vertical axis.

As Figure \ref{fig:sim2-power-heatmaps} shows, at small $n$ and moderate-to-high $\rho$, and when $Y = \tau_Y$, the proxy metric launches with substantially greater probability than does the north star metric: given a true treatment effect on the north star, the proxy metric is much more likely to be significant than the north star metric in this regime.  However, at low $\rho$, the proxy metric can launch with \emph{lower} probability than the north star metric, particularly as the sample size increases, despite its greater signal-to-noise ratio.  The intuition for this is that, when $\rho$ is small, fixing $Y$ at $\tau_Y$ barely pulls the proxy metric away from its prior mean of zero.  As a result, the ``power'' (by the launch definition) of the proxy metric is very low in this regime.  In contrast, the blended metric has good performance in both regimes as it adaptively shifts the balance of decision-making between the north star and proxy metrics.

In the second set of simulations, I fix $\tau_Y = 0$ and plot the false positive risk as $n$ and $\rho$ vary.  As Figure \ref{fig:sim3-fpr-heatmaps} shows, the type I error rate is strictly controlled at 0.05 for the north star metric, whereas it is substantially higher for the proxy metric.  Even more pathologically, the false positive risk is \emph{increasing} in the sample size (except at $\rho = 1$, where the proxy and north star metrics are perfectly aligned), meaning that larger experiments actually incur a worse false positive risk under the proxy-only rule.  In contrast, the blended metric has a much lower false positive risk that interpolates between the north star and proxy-only decision rules.  Furthermore, as the sample size increases, the optimal weights naturally shift away from the proxy metric towards the north star, which gradually reduces the false positive risk.  In the limit, the blended metric converges to the north star metric and so inherits its type I error control.

\begin{figure}
    \centering
    \includegraphics[width=0.9\linewidth]{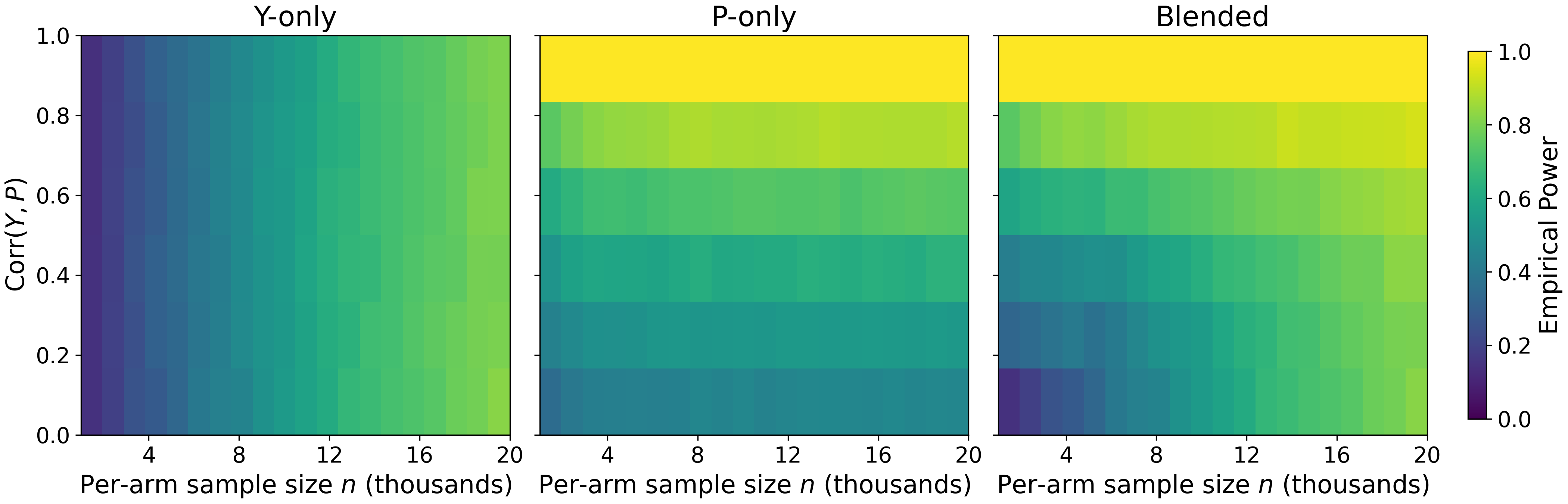}
    \caption{Empirical Power Heatmaps across $n$ and $\rho$.}
    \label{fig:sim2-power-heatmaps}
\end{figure}

\begin{figure}
    \centering
    \includegraphics[width=0.9\linewidth]{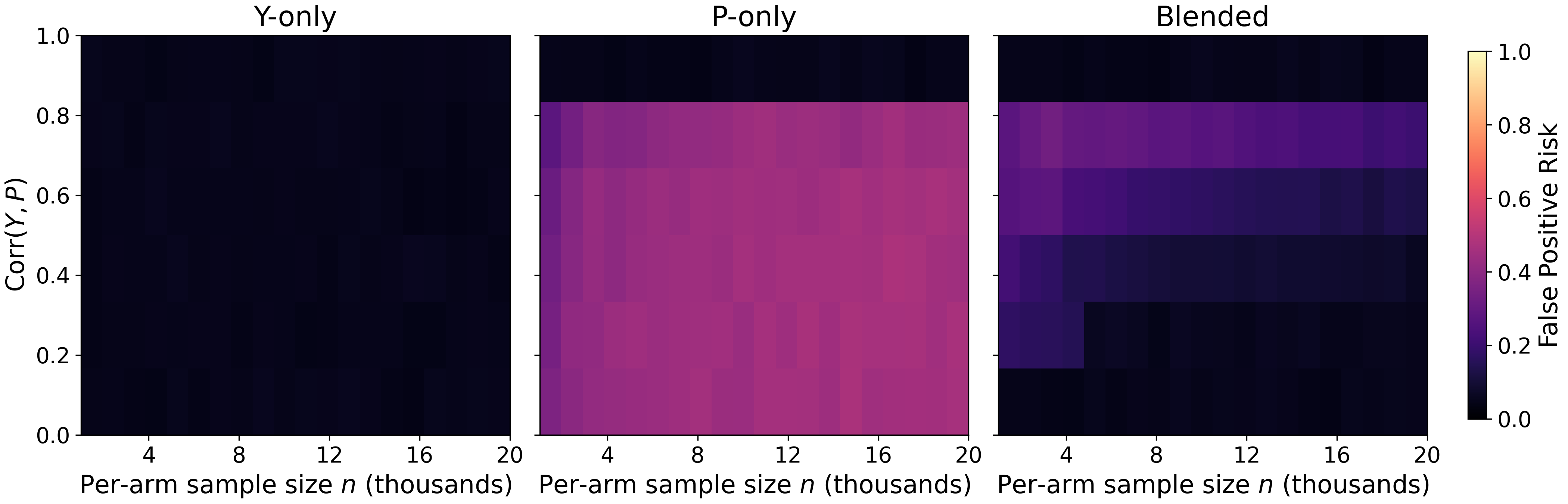}
    \caption{False Positive Risk Heatmaps across $n$ and $\rho$.}
    \label{fig:sim3-fpr-heatmaps}
\end{figure}

Finally, I assess the robustness of the core insights to heavy-tailed treatment effect distributions by repeating the above simulations for a multivariate t-distribution with $\nu=3$ degrees of freedom.  As Figure \ref{fig:sim4-nu3-vs-inf} shows, the qualitative conclusions are robust.  The blended metric consistently achieves better returns than both single-metric rules, with the north star rule converging as the sample size increases.  The power of the proxy metric (again, using the launch definition) remains largely driven by its structural correlation with the north star rather than its sample size.  In contrast, the power of the blended metric increases and even exceeds the power of the proxy as $n$ grows.  Unlike that of the proxy metric, the false positive risk of the blended metric trends downward and converges to the nominal rate as the sample size increases.

\begin{figure}
    \centering
    \includegraphics[width=\linewidth]{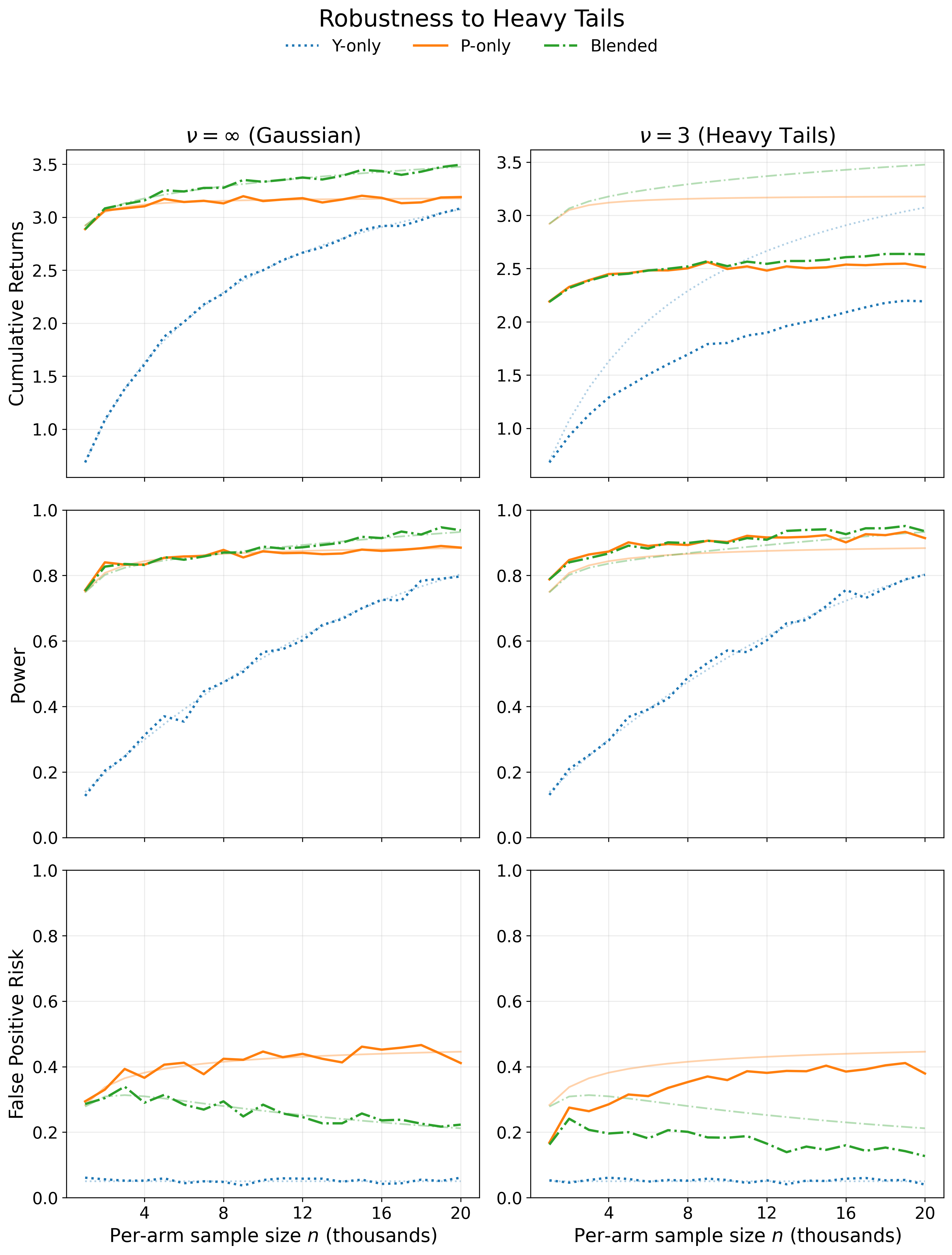}
    \caption{Robustness to Heavy Tails: Returns, Power, and False Positive Risk, Comparing $\nu=\infty$ (Gaussian) vs $\nu=3$.}
    \label{fig:sim4-nu3-vs-inf}
\end{figure}

%% file: sections/05-empirical-application-at-netflix.tex
\section{Empirical Application}
\label{sec:emp}

In this section, I apply the methodology to a real-world experimentation program at Netflix.  The program seeks to optimize, as its north star metric $Y$, a downstream outcome that I will call \emph{plays}.  Plays are preceded by an upstream proxy $P$ that I will refer to as \emph{clicks}.  Clicks are more sensitive to product interventions than are plays, but they are also less tied to long-term engagement and thus less valuable to the program.

The following analysis is based on 15 A/B tests and 50 treatment arms in this program.  All figure axes are anonymized for confidentiality. Although the underlying data are proprietary, code to replicate all calculations and figures can be found at \url{https://github.com/winston-chou/blending}.

\subsection{Assumption Checking}

I begin by verifying that clicks are indeed more sensitive than plays.  Treatment effects on clicks have an SE-ratio---the ratio of the empirical spread of treatment effects to the average within-experiment SE---of $\sim$12 and $\sim$60\% of treatment arms have a statistically significant clicks effect at $\alpha = 0.05$.  In contrast, treatment effects on plays have an SE-ratio of just 2 and only $\sim$25\% of arms are significant.  Thus, in terms of this SE-ratio metric, clicks are about six times more sensitive than plays.

Next, I verify that treatment effects on clicks ($P$) are indeed highly correlated with treatment effects on plays ($Y$).  In Figure~\ref{fig:empirical-covariance}, I plot the observed treatment effects on clicks $\widehat{P}$ against the observed treatment effects on plays $\widehat{Y}$ across all experiments and treatment arms along with their 95\% confidence-interval error bars.  As the plot shows, the two have a strong linear relationship.  To affirm that this relationship is not driven by correlated measurement error, I split each experiment into two folds and estimate the OLS regression of $\widehat{Y}$ on $\widehat{P}$ using the cross-fold estimator \cite{bibaut2024learning, coey2019improving}.  The cross-fold OLS and raw OLS are closely aligned, indicating a limited impact of measurement error.

Lastly, the scatterplot also indicates that the core assumptions of the framework are reasonable: the data are approximately elliptical and the relationship has no obvious nonlinearities or discontinuities.  This suggests that the assumptions of normality and linearity are not too badly violated.

\begin{figure}
    \centering
    \includegraphics[width=0.75\linewidth]{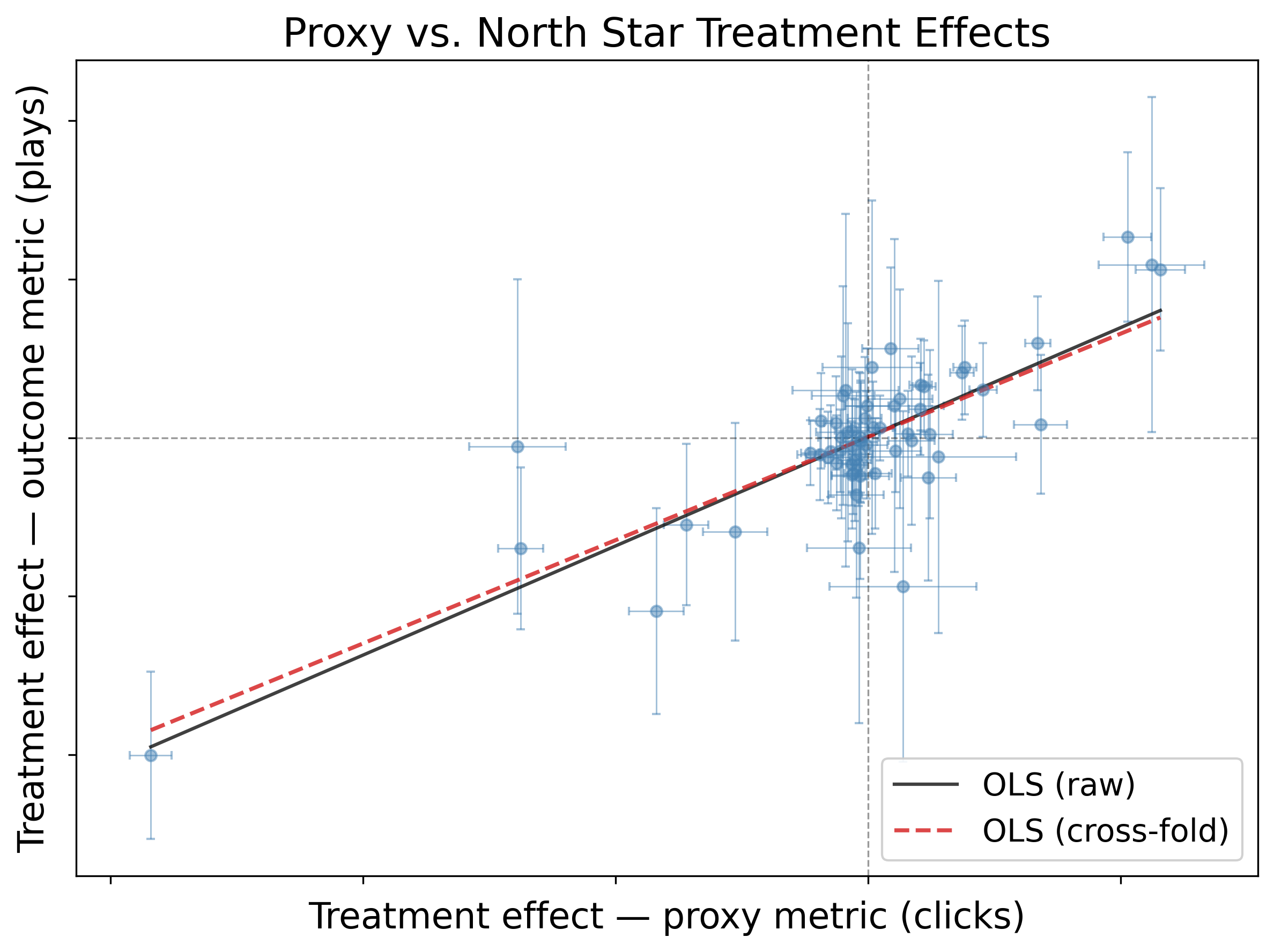}
    \caption{Treatment effects on the proxy (clicks, $x$-axis) vs.\ the north star (plays, $y$-axis) across 15 experiments; error bars are 95\% CIs.  Two fits are shown: raw OLS (black, solid) and cross-fold OLS (red, dashed).  Axes suppressed for confidentiality.}
    \label{fig:empirical-covariance}
\end{figure}

\subsection{Results}

This section presents the optimal blending weights for this dataset.  The minimum sample size test~\eqref{eq:bound-n} is passed for all experiments, and so I use the closed-form approximation to the optimal weights from Proposition~\ref{prop:approximation}.  Figure~\ref{fig:sample-size} displays these weights for the approximate range of sample sizes found in the data ($n \in [1\text{M}, 25\text{M}]$ per arm).  Note that the simplex constraint binds at the lower end of this range, with the weights concentrating entirely on clicks for $n \lesssim 2$ million, in which case the interior approximation is replaced by the simplex-constrained solution (which just puts all weight on clicks).

I find that, at the median experiment size of $n \approx 5$ million per arm, the optimal blend assigns 52\% weight to clicks and 48\% to plays.  The substantial weight on clicks, even at this relatively large sample size, reflects the fact that clicks have a far higher signal-to-noise ratio than do plays.  However, as $n$ grows, the sampling penalty shrinks and $A^*$ tilts toward plays (reaching $\approx 90\%$ plays at the largest experiment sizes $n = 25$ million).

Figure~\ref{fig:returns} displays the expected return per experiment for each decision rule.  The key takeaway is that the blended rule dominates both single-metric rules at all sample sizes observed in the data.  Among the single-metric rules, clicks-only earns higher returns than plays-only across the entire observed range, since the signal-to-noise ratio of plays remains small even at the largest sample sizes; the gap narrows as $n$ grows but does not close below $n = 25$ million.  Yet, as the plot shows, incorporating plays information at these larger sample sizes still improves returns compared to the clicks-only rule.

\begin{figure}
    \centering
    \includegraphics[width=0.75\linewidth]{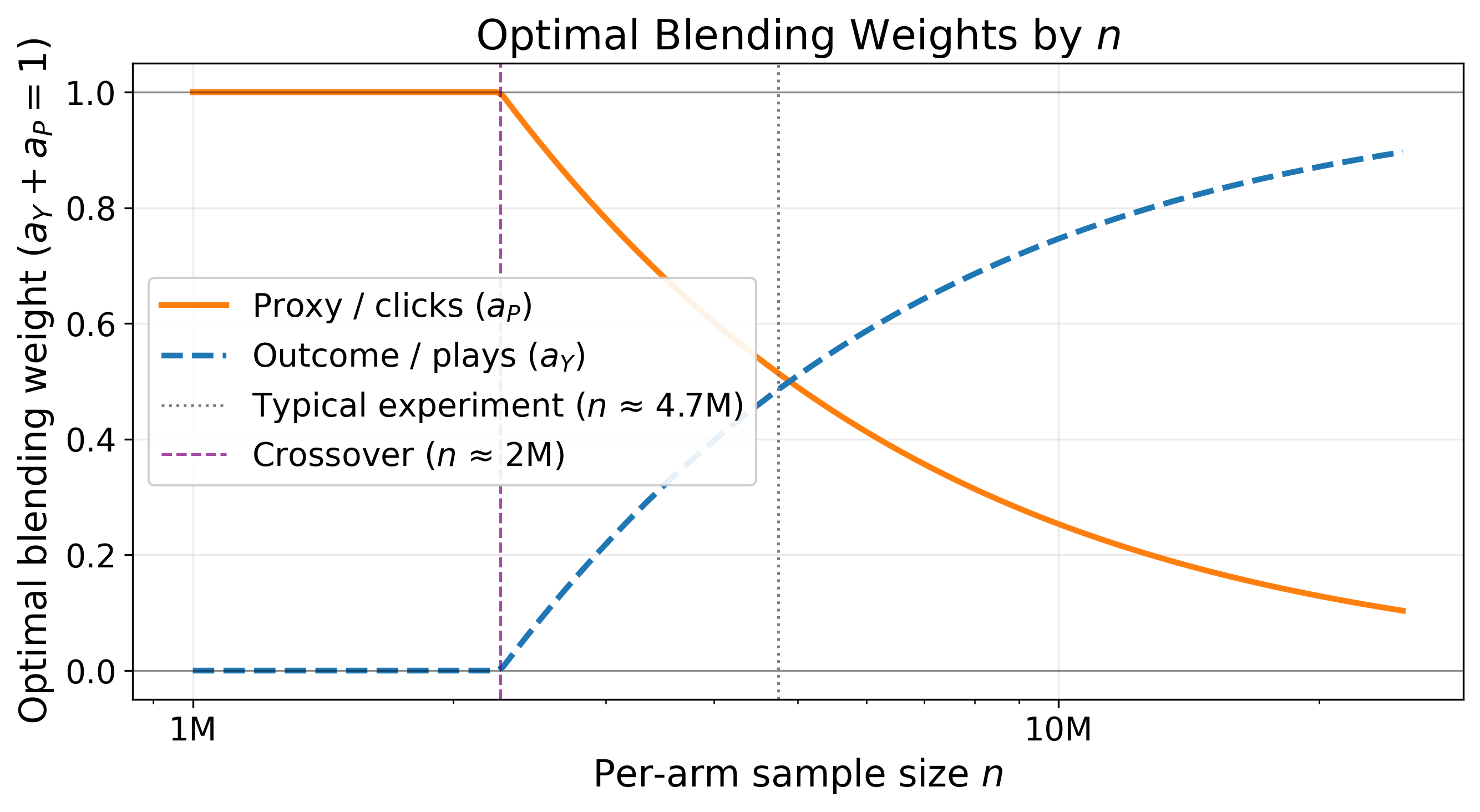}
    \caption{Approximately optimal blending weights vs.\ per-arm sample size $n$ (log scale), using cross-fold parameter estimates from 15 experiments.  The simplex constraint binds below $n \approx 2$ million (weights load entirely on clicks); above it both weights are strictly positive and the weight on plays rises with $n$.  Typical experiment size is approximately 5 million per arm (grey dotted vertical).}
    \label{fig:sample-size}
\end{figure}

\begin{figure}
    \centering
    \includegraphics[width=0.75\linewidth]{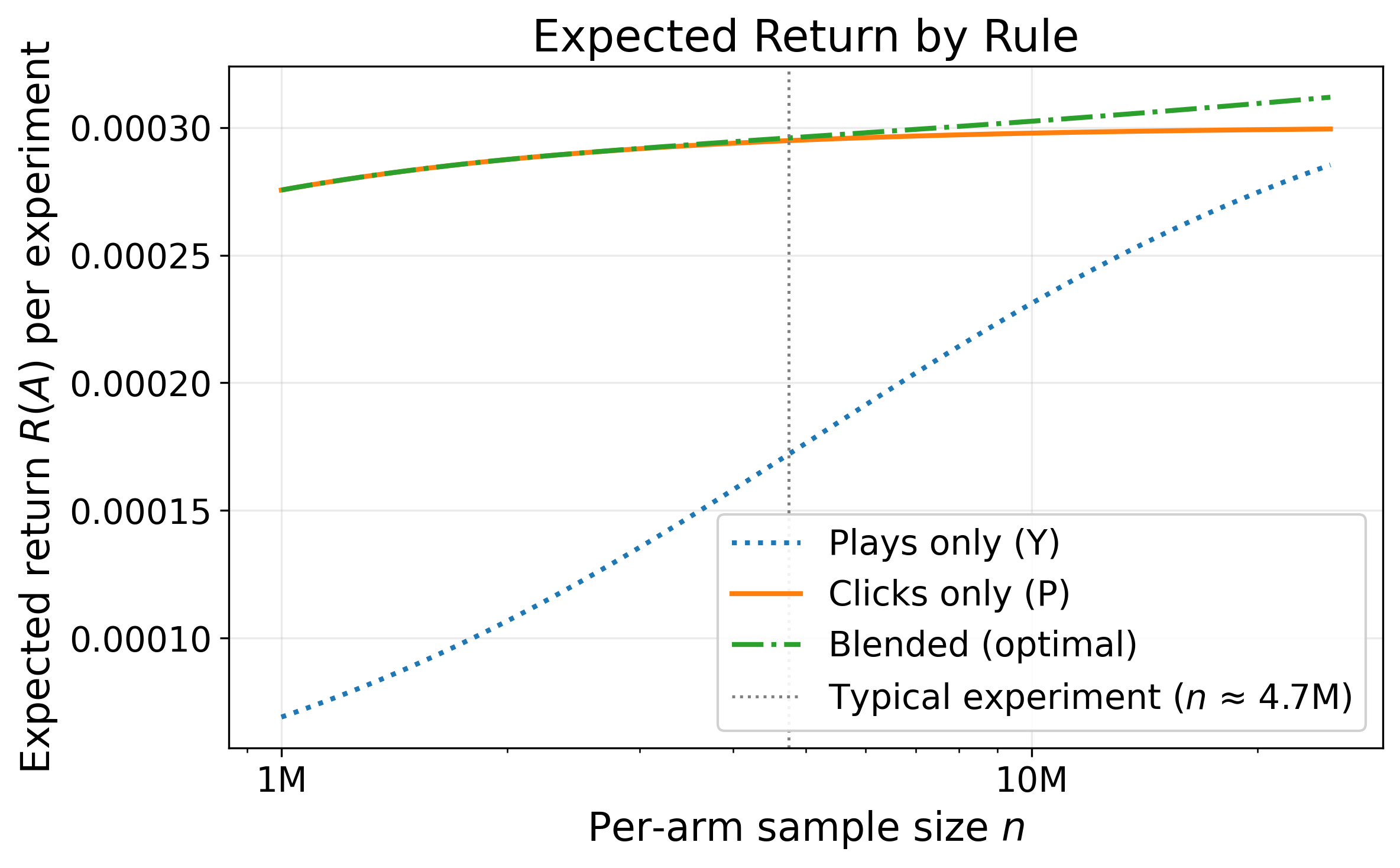}
    \caption{Expected return per experiment vs.\ per-arm sample size $n$ (log scale) for three decision rules: plays-only (blue, dotted), clicks-only (orange, solid), and the optimal blend (green, dash-dot).  The blended rule dominates both single-metric rules and clicks-only earns more than plays-only across the entire range of empirically observed sample sizes.}
    \label{fig:returns}
\end{figure}

%% file: sections/06-conclusion.tex
\section{Conclusion}

I study the optimal role of proxy metrics in decision-making with an observed but statistically insensitive north star.  Rather than basing decisions exclusively on either metric, I propose a blending approach in which the experimenter decides A/B tests using a weighted average of the estimated treatment effects on the proxy metric(s) and the north star.  I provide a closed-form approximation to the optimal blending weights and show that the optimal weight on the north star is increasing in the size of each test and decreasing in the structural alignment of true treatment effects on the north star and the proxy metric(s).  I show how to estimate the requisite parameters for these weights from past experiments, and lastly demonstrate the applicability of the methods to a real-world experimentation program at Netflix.

This work can be extended in several directions.  First, the theoretical framework assumes that experiments are independently and identically distributed.  Relaxing this assumption could lead to novel insights, for example, on how experimenters should allocate resources across different experimentation programs or ideas of varying quality.  Second, I make distributional assumptions that may or may not be appropriate in meta-analyses of past experiments \citep{molenberghs2002statistical}.  In particular, I assume that the underlying treatment effects are Gaussian, when in reality treatment effect distributions may be heavier-tailed \citep{azevedo2019empirical, azevedo2018b}.  Although I find that the core insights are robust to these assumptions, future work should analyze more flexible and realistic models of experiments.  Third, whereas this work assumes access to past experiments, future work can explore how to leverage observational data to inform the design and selection of proxy metrics when no such data are available.

\begin{credits}
    \subsubsection{\ackname} For valuable suggestions, I thank participants in Netflix's Statistics, Methodology, and Engineering (SME) colloquium (2025-08-15) and the four anonymous reviewers.  The Appendix for this paper can be found at \url{https://github.com/winston-chou/blending}.
    
    \subsubsection{\discintname}
    The author has no competing interests to declare that are
    relevant to the content of this article.
\end{credits}

%% file: Supplementary/appendix-content.tex
\appendix

\section{$\underline{n}(\tilde{A})$ is a Conservative Bound}

Let $\underline{n}(A) := (z^2-1)\frac{2\,A^\top\Omega A}{A^\top\Sigma_X A}$.  The goal is to show that $\underline{n}(\tilde{A}) \ge \underline{n}(A^*)$, and so checking that $n \ge \underline{n}(\tilde{A})$ is sufficient to establish that $n \ge \underline{n}(A^*)$, which is a prerequisite to using the closed-form approximation to the optimal weights. 

First, under joint normality of $X = (P, Y)$, the expected returns to a single experiment for any choice of $A$ are given by
\[
R(A) = \frac{\Gamma^\top A}{\sqrt{A^\top \Sigma_\Xhat A}} \phi\left(z \sqrt{q_n(A)}\right),
\]
where $q_n(A) = \frac{A^\top (2\Omega/n) A}{A^\top \Sigma_\Xhat A}$.

Note that the returns are invariant to the scale of $A$ and depend only on its direction.  Therefore, without loss of generality, I can fix the scale of $A$ by requiring $\Gamma^\top A = 1$.  Under this normalization, maximizing $R(A)$ is equivalent to minimizing the following objective function:
\[
F(A) := \log A^\top \Sigma_\Xhat A + z^2 q_n(A).
\]
Abusing notation, I will redefine $A^*$ as the solution to the following constrained minimization problem:
\begin{equation}
    A^* = \arg \min_{\Gamma^\top A = 1} F(A).
\end{equation}

Also under this normalization, $\tilde{A}$ can be written as the solution to the following constrained minimization problem:
\begin{equation}
    \tilde{A} = \arg \min_{\Gamma^\top A = 1} \underbrace{A^\top (\Sigma_X + (1 + z^2) 2 \Omega / n) A}_{W(A)}.
\end{equation}
The following assumes that both $A^*$ and $\tilde{A}$ are interior solutions to their respective minimization problems, i.e., $A_i^* > 0$ and $\tilde{A}_i > 0$ for all $i$.

$W(A)$ can be rewritten as
\begin{eqnarray}
    W(A) &=& A^\top (\Sigma_X + (1 + z^2) 2 \Omega / n) A \\
    &=& A^\top \Sigma_\Xhat A \left(1 + z^2 \frac{A^\top (2\Omega /n) A}{A^\top \Sigma_\Xhat A}\right) \\
    &=& v_A (1 + z^2 q_n(A)),
\end{eqnarray}
where $v_A := A^\top \Sigma_\Xhat A$.

Since $\tilde{A}$ minimizes $W(A)$, we have that:
\begin{eqnarray}
    v_{\tilde{A}}(1 + z^2 q_n(\tilde{A})) &\le& v_{A^*}(1 + z^2 q_n(A^*)) \\ \label{eq:ratio-inequality}
    \frac{v_{\tilde{A}}}{v_{A^*}} &\le& \frac{1 + z^2 q_n(A^*)}{1 + z^2 q_n(\tilde{A})}. 
\end{eqnarray}

Now consider $F(\tilde{A}) - F(A^*)$:
\begin{equation}
    F(\tilde{A}) - F(A^*) = \log \frac{v_{\tilde{A}}}{v_{A^*}} + z^2 (q_n(\tilde{A}) - q_n(A^*)).
\end{equation}

\eqref{eq:ratio-inequality} implies that $F(\tilde{A}) - F(A^*)$ is bounded above by
\begin{equation}
    F(\tilde{A}) - F(A^*) \le \log \frac{1 + z^2 q_n(A^*)}{1 + z^2 q_n(\tilde{A})} + z^2 (q_n(\tilde{A}) - q_n(A^*)).
\end{equation}

Grouping terms, we have that
\begin{equation}
    F(\tilde{A}) - F(A^*) \le [\log (1 + z^2 q_n(A^*)) - z^2 q_n(A^*)] - [\log (1 + z^2 q_n(\tilde{A})) - z^2 q_n(\tilde{A})].
\end{equation}

Suppose, by way of contradiction, that $q_n(A^*) > q_n(\tilde{A})$.  Because $\log(1 + x) - x$ is strictly decreasing on the domain of $q_n(A)$ (i.e., for $x > 0)$, $q_n(A^*) > q_n(\tilde{A})$ implies that the term on the right-hand side of the inequality is negative.  However, $F(\tilde{A}) - F(A^*)$ cannot be negative, since $A^*$ minimizes $F(A)$.  Therefore, it must be the case that $q_n(A^*) \le q_n(\tilde{A})$.  $q_n(A)$ is an increasing function of $\frac{A^\top \Omega A}{A^\top \Sigma_X A}$, so this implies that
\begin{equation}
    \frac{A^{*\top} \Omega A^*}{A^{*\top} \Sigma_X A^*} \le \frac{\tilde{A}^\top \Omega \tilde{A}}{\tilde{A}^\top \Sigma_X \tilde{A}}.
\end{equation}

Multiplying both sides by $2(z^2 - 1)$ yields
\begin{equation}
    \underline{n}(A^*) \le \underline{n}(\tilde{A}),
\end{equation}
as desired.

\section{Average Returns are Decreasing in $n$ for $z \le 2.414$}

The expected returns of a single experiment for any choice of $A$ are given by:
\begin{equation}
    R(A, n) = \frac{\Gamma^\top A}{\sqrt{A^\top \Sigma_\Xhat A}} \phi\left(z \sqrt{q_n(A)}\right),
\end{equation}
where $q_n(A) := \frac{A^\top (2\Omega/n) A}{A^\top \Sigma_\Xhat A}$ is the ratio of the within-experiment sampling variance of $\Zhat(A)$ to its total variance across experiments, $\Sigma_\Xhat = \Sigma_X + 2\Omega/n$.

Note that
\begin{eqnarray}
q_n(A) = q_A(n) &=& \frac{A^\top (2\Omega/n) A}{A^\top \Sigma_\Xhat A} \\
&=& \frac{A^\top (2\Omega) A}{n A^\top \Sigma_X A + A^\top (2\Omega) A},
\end{eqnarray}
so $q_A(n)$ is strictly decreasing in $n$.

Isolating the terms that depend on $n$, we have that:
\begin{equation}
    R(A, n) / n \propto \underbrace{\frac{q_A(n)}{\sqrt{1 - q_A(n)}} \phi\left(z \sqrt{q_A(n)}\right)}_{F_z(q)},
\end{equation}
for $q_A(n) \in (0, 1)$.

Taking logs and differentiating with respect to $q$, we have:
\begin{equation}
    \frac{d}{dq} \log F_z(q) = \frac{1}{q} + \frac{1}{2(1 - q)} - \frac{z^2}{2}.
\end{equation}
The first two terms are minimized at $q^* = 2 - \sqrt{2} \in (0, 1)$ and the minimum value is $1 + \frac{\sqrt{2}}{2} + \frac{1 + \sqrt{2}}{2} = \frac{3}{2} + \sqrt{2} > 0$.  Therefore, the derivative remains positive so long as $z^2 \le 3 + 2\sqrt{2}$.  Set $\sqrt{3 + 2\sqrt{2}} = \sqrt{a} + \sqrt{b}$.

Squaring both sides of the inequality, we have $3 + 2\sqrt{2} = a + b + 2\sqrt{ab}$.  If we set $a + b = 3$ and $2\sqrt{ab} = 2\sqrt{2}$, we have $a = 2$ and $b = 1$.  Therefore, $R / n$ is increasing in $q$ as long as $1 \le z \le 1 + \sqrt{2} \approx 2.414$.  Since $q$ is decreasing in $n$, this implies that $R / n$ is decreasing in $n$ for the same range of $z$ values.

In other words, the average return per unit of allocation is decreasing in $n$ as long as the launch threshold is not too strict.  If it is very strict, then almost no treatment will be launched unless the measurement is sufficiently precise.